\DeclareMathOperator*{\argmin}{argmin}
\newtheorem{theorem}{Theorem}[]
\newtheorem{definition}{Definition}[]
\newtheorem{lemma}{Lemma}[]
\newtheorem{remark}{Remark}
\newtheorem{case}{Case}
\newtheorem{subcase}{Subcase}
\newtheorem{Proposition}{Proposition}[]
\newtheorem{Claim}{Claim}[]
\begin{document}
\title{Towards Maximizing Nonlinear Delay Sensitive Rewards in Queuing Systems}

\author{%
  \IEEEauthorblockN{Sushmitha Shree S\textsuperscript{*}\textsuperscript{{\textsection}}, Avijit Mandal\textsuperscript{{\textsection}}, Avhishek Chatterjee, Krishna Jagannathan}\\
  \IEEEauthorblockA{Department of Electrical Engineering, Indian Institute of Technology Madras, Chennai 600036, India\\
                   \{sushmithasriram, avijitbesu1995\}@gmail.com, 
                   \{avhishek, krishnaj\}@ee.iitm.ac.in 
                    }
                    }

\maketitle
\begingroup\renewcommand\thefootnote{*}
\footnotetext{Corresponding author}
\endgroup

\begingroup\renewcommand\thefootnote{\textsection}
\footnotetext{Equal technical contribution}
\endgroup

\pagenumbering{gobble}
\begin{abstract}

We consider maximizing the long-term average reward in a single server queue, where the reward obtained for a job is a non-increasing function of its sojourn time. The motivation behind this work comes from multiple applications, including quantum information processing and multimedia streaming. We introduce a new service discipline, shortest predicted sojourn time (SPST), which, in simulations, performs better than well-known disciplines. We also present some limited analytical guarantees for this highly intricate problem.
\end{abstract}
\begin{IEEEkeywords}
delay sensitive reward, service discipline, sojourn time.
\end{IEEEkeywords}

\section{Introduction}
\IEEEPARstart{J}{ob} scheduling in single server systems is one of the most widely researched areas due to its diverse applications \cite{Potts2009}. Historically, the design of service disciplines focused on optimizing the average linear functions of sojourn times (a.k.a response times). Under this performance measure, the discipline that processes the job with the shortest remaining processing time (SRPT) proves to be optimal \cite{Miller1966}. However, almost no work considers optimizing nonlinear functions of sojourn times, which have become crucial in many emerging applications, a few of which we briefly discuss. 
\begin{enumerate}
    \item Quantum information processing: The quantum bits (qubits) that are generated for sequential processing by a circuit or for transmission over a channel undergo decoherence while waiting to be processed or transmitted \cite{Nielsen2010}. The effective information extracted out of a stream of bits is the stationary expectation of a non-increasing function of the sojourn time \cite{krishna2019}.

    \item Multimedia streaming: In streaming applications, delayed packets cause stream to break or pause. Hence, the value of a multimedia packet decreases with its delay \cite{Kurose2012}. 
    
    \item Delay sensitive online services: In online service platforms like ride-sharing and food delivery, customers' satisfaction and hence, in turn, ratings often depend on the delay in the service. In fact, in many settings, user dissatisfaction due to delays cannot be compensated by better service or other promotional offers \cite{Daugherty2010,Dewan1990}.
\end{enumerate}

Although optimizing nonlinear functions of sojourn times is crucial for these applications, there is hardly any study aimed at optimizing the average nonlinear functions of sojourn times, even in a single server case. This paper takes a few steps towards this goal, and is motivated by the aforementioned applications.

\subsection{Related work and Motivation}
In work conserving single server queuing systems, jobs can arrive arbitrarily. When the service requirements (job sizes) are known, SRPT minimizes the average sojourn time regardless of the arrival and service distributions \cite{Miller1966}. Under SRPT, the job in service has the least remaining processing time, and an incoming job preempts the server only if its processing time is shorter than the remaining processing time of the job in service. Specifically, SRPT minimizes the sojourn time for every arrival sequence \cite{Miller1966}. In other words, SRPT is said to be sample-path optimal. Schrage in \cite{Schrage1968} first discussed the proof of optimality of SRPT, followed by Smith in \cite{Smith1978}. SRPT gained popularity thereon that prompted the analysis of its performance guarantees \cite[Chapter~33]{Balter2013}, the evaluation of its fairness among jobs \cite{Balter2001}, its implementation in web servers \cite{Balter2000} and its extension to multiple server systems \cite{Balter2018, Vaze2020}. 

Unlike classical queuing systems that assume no constraints on the waiting times, jobs do come with fixed deadlines in certain applications \cite{Barrer1957}. If the server does not process a job within its deadline, it drops off the queue and never returns for service (balking or reneging). The dynamics of these systems have been extensively investigated under multiple settings \cite{Daley1965, Baccelli1984, Kargahi2004, Pascal2013, Ahmadi2021, Balter2015}. The most common performance measure here is the overall loss fraction that captures the fraction of jobs lost out of the total arrivals to the system. The earliest deadline first (EDF) discipline is shown to be optimal in minimizing the overall loss fraction irrespective of the service requirements \cite{Liu1973, Towsley1988}. However, minimizing the overall loss fraction does not always guarantee the minimum average sojourn time. Therefore, it is reasonable to associate a reward for each job that captures the trade-off between the fraction of loss and the average sojourn time in the system. In \cite{Raviv2018}, the deadline and reward of jobs are known upon arrival, and the optimal policy that maximizes the rewards per service requirement of served jobs has been studied. \cite{Hyon2020} and \cite{Yu2018} present a similar line of work. Nevertheless, in real-time systems, neither the deadlines nor the rewards of jobs are known to the server.

Our work is inspired by the applications such as quantum information processing and multimedia streaming. In these applications, the information in the jobs (qubits in quantum systems \cite{krishna2019} and data packets in multimedia systems \cite{Draper2005}) become useless or erased after a certain deadline. Unlike impatient customers, the jobs do not drop off the queue; however, processing them after their deadline may not be useful to the system. 

For instance, in the quantum setting, qubits arrive sequentially at a quantum system and wait in the queue until they are processed. While a qubit waits in the queue, it undergoes decoherence due to its interaction with the environment \cite{Nielsen2010}. The decoherence of a qubit leads to the erasure of its information, and the probability of qubit erasure is modeled as an explicit function of its sojourn time. For example, if a qubit waits for $W$ units of time in the system, then the probability of its erasure is modeled as $p(W)=1-\exp{(-\kappa W)}$ for some $\kappa>0$, where $\kappa$ is the characteristic parameter of the quantum system \cite{krishna2019}. In other words, a qubit with sojourn time $W$ is associated with a reward of the form $\exp{(-\kappa W)}$ for some $\kappa>0$. A similar model is relevant in the areas of multimedia streaming \cite{Draper2005} and crowdsourcing \cite{Avhishek2017}. 

The information capacity of quantum erasure channels has been derived irrespective of the service discipline in \cite{krishna2019}. Specifically, this capacity is proportional to $\mathbb{E}[\exp{(-\kappa W)}]$, where the expectation is over the limiting distribution of the sojourn times. The goal of maximizing the capacity of quantum erasure channels poses an interesting problem and reduces to maximizing the average nonlinear function of sojourn times (rewards). Our work is inspired by such a setting. In particular, this work aims to maximize the average nonlinear functions of the form $\exp{(-\kappa W)}$ for some $\kappa>0$ from a scheduling perspective.

\subsection{Contributions}
In this work, we consider a work conserving single server queuing system in which the service requirements of the jobs are known upon arrival. Each job is associated with a reward based on its sojourn time. Specifically, the reward of a job is a specified non-increasing function, possibly nonlinear in its sojourn time. This work aims to identify the service discipline that maximizes the long-term average of rewards. Since the rewards are a function of sojourn times, this essentially ensures the maximization of the long-term average of rewards while processing the maximal number of jobs.

We view this problem for two arrival models. Firstly, we consider batch arrival models in which an arbitrary number of jobs arrive at the server at the same instant. In this model, we show that processing the jobs with the shortest service requirements maximizes the long-term average rewards of the system. In addition, we show that this result holds for all monotonic functions of sojourn times.

Next, we analyze a more realistic arrival model in which jobs arrive according to a stochastic process. It is well-known that SRPT maximizes linear rewards \cite{Miller1966} for all arrival sequences and service distributions; however, it is unclear if SRPT maximizes nonlinear rewards. For a single server system with a unit service rate, simulations show that SRPT does not perform better for some arrival and job size distributions. Indeed, we find that identifying a discipline that maximizes any monotonic function of sojourn times poses a difficult problem. This is mainly because the performance of the service disciplines has a complex dependence on the i) arrival and service distributions, ii) job sizes, and iii) function of sojourn times. Certainly, the simulation of the performance of existing disciplines shows that there is no clear winner for all arrival sequences and functions of sojourn times. To reduce the complex dependency on the function of sojourn times, we focus only on rewards of the form $\exp{(-\kappa W)}$ for some $\kappa>0$, where $W$ represents the sojourn time. These functions have practical implications in applications such as quantum information systems and multimedia streaming, as mentioned before.

In this work, we introduce a service discipline, \emph{shortest predicted sojourn time (SPST)} and analyze its performance in this setting. According to SPST, a job in service has the least predicted sojourn time. Through simulations, we infer that the performance of SPST is promising for all arrival and job size distributions. However, analytically proving this for all arrival distributions and job sizes is still a hard problem. Therefore, we assume a simple model where jobs of the same size arrive at the server with stochastic interarrival times. Due to the combinatorial intricacies, we compare the performance of SPST with only the first come first serve (FCFS) discipline for this model. In particular, we show that the long-term average reward under SPST is higher than that under FCFS for $\kappa\ge \log_e{2}$. Moreover, it is evident from this result that there is no optimal service discipline that maximizes the long-term average of rewards of the form $\exp{(-\kappa W)}$ for all $\kappa$. 

\subsection{Organization}
The rest of the paper is organized as follows: Section \ref{system_model} gives an overview of the system with batch arrivals and stochastic arrivals. Section \ref{batch_arrivals} and \ref{Section:Sequential} discuss the main results for these two scenarios respectively. Under stochastic arrivals, the simulations of the performance of SPST and other disciplines are discussed in section \ref{simulation_results}. Followed by the analytical findings of the performance comparison of SPST with FCFS that are covered in section \ref{analytical_results}. Proofs are detailed in the Appendix.

\section{System Model}\label{system_model}
We consider a discrete-time work conserving single server queue with unit service rate. The jobs with integer sizes $\{S_i, i\in \mathbb{N}\}$ arrive randomly at the server. These jobs are indexed by positive integers according to their arrivals, with the ties broken arbitrarily. At the beginning of every time slot, the server can change its service from one job to another based on the service discipline. Each job waits in the queue before being served, and the total time it spends in the system is known as its sojourn time. For a job indexed by $i$, $W_i$ represents its sojourn time, and $f(W_i)$ is the associated reward, where $f(\cdot)$ is a non-increasing function. This work aims to find a service discipline that maximizes the long-term average of rewards.

In this work, we consider two scenarios: (i) batch arrivals and arbitrary job sizes and (ii) stochastic arrivals and stochastic job sizes. In the first scenario, as the name suggests, $n$ jobs arrive at time $0$ and their sizes are $\{S_i:1\le i \le n\}$. In this context, our goal is to find  a service discipline that maximizes the accumulated reward, $\sum_{i=1}^n f(W_i)$, for any positive integer $n$ and $\{S_i:1\le i \le n\}$. 

In the second setting, jobs arrive according to some point process with i.i.d. positive inter-arrival times $Y_1, Y_2, \ldots$.  Job sizes $\{S_i\}$ are also i.i.d. positive random variables. In this scenario, the goal is to find a {\em stationary} service discipline $\pi$ under which the long-term average reward, $\lim \limits_{n\to\infty}\frac{1}{n}\sum_{i=1}^n f(W_i)_\pi$, is maximum. Note that whenever $\mathbb{E}[Y_1]>\mathbb{E}[S_1]$, this limit exists almost surely for any work conserving stationary service discipline.  

\section{Batch Arrivals and Arbitrary Job Sizes}
\label{batch_arrivals}
In any queue setup, jobs are assumed to arrive singly at a server. However, this is not the case in all real-world scenarios. Jobs do come in batches of fixed or random sizes \cite{chaudhry1983} as in the case of cloud-based data processing. This section characterizes the service discipline that maximizes the accumulated reward in a queuing system with a single batch of job arrivals.

\begin{definition}[{Shortest job first (SJF)\cite[Chapter~31]{Balter2013}}] Under this non-preemptive service discipline, whenever the server frees up, it serves the job with the shortest service requirement to completion. \label{definition_SJF}
That is, at any time $t$, the index of the job in service is $k=\argmin\limits_i S_i$. Ties are broken arbitrarily.

\end{definition}
In the case of a single batch of arrivals, the jobs are served in increasing order of their sizes under SJF.  

\begin{theorem}\label{theorem:batch}
    For any batch size $n$ and any service requirements $\{S_i:1\le i\le n\}$, SJF maximizes $\sum_{i=1}^n f(W_i)$.

\end{theorem}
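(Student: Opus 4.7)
The plan is to exploit the fact that $\sum_{i=1}^n f(W_i)$ depends only on the multiset of sojourn times, or equivalently, on the sorted sequence of completion times, since all jobs share the common arrival epoch $0$. Let $C_{(1)} \le C_{(2)} \le \cdots \le C_{(n)}$ denote the order statistics of the completion times under an arbitrary work-conserving schedule $\pi$, and let $S_{(1)} \le S_{(2)} \le \cdots \le S_{(n)}$ be the sorted service requirements. The key claim I would establish is
\[
C_{(k)} \;\ge\; \sum_{i=1}^{k} S_{(i)} \qquad \text{for every } k \in \{1,\ldots,n\},
\]
and that SJF achieves equality throughout.

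The lower bound follows from a work-conservation argument that does not require $\pi$ to be non-preemptive. By time $C_{(k)}$, at least $k$ jobs have finished, so the server has completed the entire work of those particular $k$ jobs; since the server runs at unit rate on at most one job per slot, the sum of their sizes is at most $C_{(k)}$. The smallest possible value of that sum, over all choices of $k$ jobs drawn from the batch, is $\sum_{i=1}^k S_{(i)}$, giving the bound. For the matching upper bound under SJF, a direct induction on $k$ shows that SJF processes the batch in increasing order of size, hence $C_{(k)}^{\mathrm{SJF}} = \sum_{i=1}^k S_{(i)}$ for every $k$.

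Combining the two observations, $C_{(k)}^{\mathrm{SJF}} \le C_{(k)}^{\pi}$ for every $k$ and every work-conserving $\pi$. Since $f$ is non-increasing, $f(C_{(k)}^{\mathrm{SJF}}) \ge f(C_{(k)}^{\pi})$ term by term, and summing over $k$ yields
\[
\sum_{i=1}^{n} f(W_i)_{\mathrm{SJF}} \;\ge\; \sum_{i=1}^{n} f(W_i)_{\pi},
\]
which is the desired optimality.

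The main obstacle I anticipate is ensuring the argument is valid against preemptive and idling-free alternatives, not only against permutations of a serial processing order. A pairwise adjacent-swap argument would cleanly handle non-preemptive schedules (swapping a longer-then-shorter pair weakly improves the reward because only the first of the two sojourn times changes, and it decreases), but it is not immediately applicable when the server interleaves work across multiple jobs. The order-statistic comparison above bypasses that difficulty, since the lower bound on $C_{(k)}$ uses only that at most $t$ units of total work can be completed by time $t$. Once this uniform domination of completion-time order statistics is in hand, the monotonicity of $f$ (not just of the specific form $e^{-\kappa W}$) is all that is needed, which also explains the remark that the same proof extends to arbitrary monotone functions of sojourn times.
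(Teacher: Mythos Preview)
Your argument is correct, and it takes a genuinely different route from the paper. The paper proves Theorem~\ref{theorem:batch} via an adjacent-swap lemma: it compares two \emph{non-preemptive} processing orders that differ only in positions $k$ and $k{+}1$, shows the shorter-first order weakly dominates, and then appeals to the fact that any permutation can be reached by a sequence of such transpositions. That argument is elementary and transparent, but as you anticipated, it only competes against non-preemptive schedules; the paper does not explicitly rule out interleaved (preemptive) processing. Your order-statistic bound $C_{(k)}\ge\sum_{i\le k}S_{(i)}$ closes that gap: it uses only the unit-rate work-conservation constraint and therefore holds for every preemptive, non-idling schedule. The price is a slightly less concrete computation than the paper's two-term comparison, but the payoff is a strictly stronger conclusion and a cleaner explanation of why the result extends to any non-increasing $f$, not just $e^{-\kappa W}$.
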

The proof of theorem \ref{theorem:batch} is a direct consequence of the following lemma.
Consider that a bunch of $n$ jobs arrive at an arbitrary time $t$. Let $\{S_k, k \in [1, n]\}$ denote their sizes and $\{J_i, i \in [1, n]\}$ be the job labels in increasing order of their sizes i.e., if, for $J_i$, $J_j$ such that $i<j$, then $ S_i \leq S_j \; \forall i, j \in [1,n]$. Let $A_1$ be the service discipline that serves the jobs in the order  $\{J_1, J_2, \ldots, J_k, J_{k+1}, \ldots, J_n\}$. Consider another discipline $A_2$ with order of service $\{J_1, J_2, \ldots, J_{k+1}, J_k, \ldots, J_n\}$. Let $R_\pi$ denote the accumulated reward under service discipline $\pi$. Here, $R_{\pi}=\sum\limits_{i=1}^n f(W_i)_\pi$.

\begin{lemma}{\label{proposition:Batch}}
For a non-increasing function $f$, $R_{A_1} \ge R_{A_2}$.
\end{lemma}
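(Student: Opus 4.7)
The plan is a standard adjacent-pairwise-swap argument. Since $A_1$ and $A_2$ differ only in the order of the two consecutive jobs $J_k$ and $J_{k+1}$, I would first argue that for every index $i \notin \{k, k+1\}$ the sojourn time $W_i$ is identical under the two disciplines. For $i < k$ this is immediate, and for $i > k+1$ it follows from work-conservation within the interval used to process the pair: under either discipline the server spends exactly $S_k + S_{k+1}$ units on the pair and hence begins service on $J_{k+2}$ at the same instant. Therefore only the contributions from $J_k$ and $J_{k+1}$ can differ when comparing $R_{A_1}$ and $R_{A_2}$.

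Next I would evaluate those two contributions explicitly. Writing $T = \sum_{j=1}^{k-1} S_j$ and using that all jobs arrive at the common time $t$, under $A_1$ the pair has sojourn times $T + S_k$ and $T + S_k + S_{k+1}$, whereas under $A_2$ they are $T + S_{k+1}$ and $T + S_{k+1} + S_k$. After cancelling the common term $f(T + S_k + S_{k+1})$, the difference of the total rewards reduces to
\begin{equation*}
R_{A_1} - R_{A_2} = f(T + S_k) - f(T + S_{k+1}).
\end{equation*}
Because $S_k \le S_{k+1}$ by the labelling convention and $f$ is non-increasing, the right-hand side is non-negative, yielding the claim.

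There is no serious obstacle here; the lemma is the base case of the usual interchange argument, and only monotonicity of $f$ (not convexity, differentiability, or any further structural property) is needed. The one subtlety worth emphasising is that the rewards are attached to individual jobs whose labels are swapped between the two disciplines, and one has to check that the particular completion time $T + S_k + S_{k+1}$ is realised under both schedules---this is exactly why the cancellation is so clean. Theorem~\ref{theorem:batch} then follows by repeatedly invoking Lemma~\ref{proposition:Batch} on successive adjacent inversions of any initial service order until it is sorted by increasing size, i.e., coincides with SJF.
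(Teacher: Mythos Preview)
Your proposal is correct and follows essentially the same adjacent-swap argument as the paper: both observe that the sojourn times at positions outside $\{k,k+1\}$ are unchanged, that the later of the two swapped completion times is $\sum_{i\le k-1}S_i+S_k+S_{k+1}$ under either schedule, and that monotonicity of $f$ together with $S_k\le S_{k+1}$ handles the remaining term. Your write-up is slightly more explicit about why the positions $i>k+1$ are unaffected, but there is no substantive difference in approach.
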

\begin{proof}[Proof of lemma \ref{proposition:Batch}]
    In a work-conserving system with order of service $\{l_i, i \in [1,n]\}$, the sojourn time of job at index $l_k$, $W_{l_k}=W_{l_{k-1}}+S_{l_k}$. Equivalently,  $W_{l_k}=\sum\limits_{i=1}^{k} S_{l_i}$. Clearly, 
    \begin{align}
        f(W_{l_i})_{A_1}=f(W_{l_i})_{A_2} \quad \forall i \neq k, k+1. \label{batch:1}
    \end{align} 
    So, it is sufficient to compare $f(W_{l_k})+f(W_{l_{k+1}})$ under $A_1$ and $A_2$. 
\begin{align}
f(W_{l_{k+1}})_{A_1}&=f\Big(\sum_{i=1}^{k-1} S_{i} + S_{k} + S_{{k+1}}\Big)\nonumber\\
&=f\Big(\sum_{i=1}^{k-1} S_{i} + S_{{k+1}} + S_{{k}}\Big)\nonumber 
\end{align}
\begin{align}
&=f(W_{l_{k+1}})_{A_2}. \label{batch:2}
\end{align}
Now, $f(W_{l_{k}})_{A_1}=f\Big(\sum\limits_{i=1}^{k-1} S_{i} + S_{k}\Big)$. Since $f$ is non-increasing in its argument, we have
\begin{align}
    f(W_{l_{k}})_{A_1} \geqslant f\Big(\sum\limits_{i=1}^{k-1} S_{i} + S_{{k+1}}\Big)=f(W_{l_{k}})_{A_2}.  \label{batch:3}
\end{align}
From (\ref{batch:1}), (\ref{batch:2}) and (\ref{batch:3}), we have $R_{A_1} \geqslant R_{A_2}$. 
\end{proof}
We observe that an arbitrary order of service is a permutation of the servicing order $A_2$ and that lemma \ref{proposition:Batch} can be extended to all such orders of service in place of $A_2$. More generally, lemma \ref{proposition:Batch} states that any work-conserving discipline that serves the jobs in increasing order of their sizes yields higher rewards. Examples of such service discipline include SRPT and preemptive shortest job first (PSJF) also. 

\section{Stochastic Arrivals}\label{Section:Sequential}
We now focus on the scenario with stochastic job arrivals. The goal here is quite different from that for batch arrivals. We cannot extend the results in section \ref{batch_arrivals} to this scenario as lemma \ref{proposition:Batch} does not hold here. Furthermore, the well-known service disciplines perform differently depending on the job sizes and the arrival rates. For instance, consider that the jobs of same size, $j$, arrive with interarrival times $\{Y_i, i\in \mathbb{N}\}$, where 
\begin{equation*}
    Y_i=
    \begin{cases}
    j_1=j+1-\delta \quad \text{w.p} \; \frac{1}{2}\\
    j_2=j+1+\delta \quad \text{otherwise}
    \end{cases}
\end{equation*}
for any $\delta>0$. Note that the system is stable with $\{Y_i, i\in \mathbb{N}\}$. Let $f(W_i)=\exp{(-\kappa W_i)}\; \forall i$ for some $\kappa>0$. 

\begin{figure}[hbpt!]
    \centering    \includegraphics[scale=0.58]{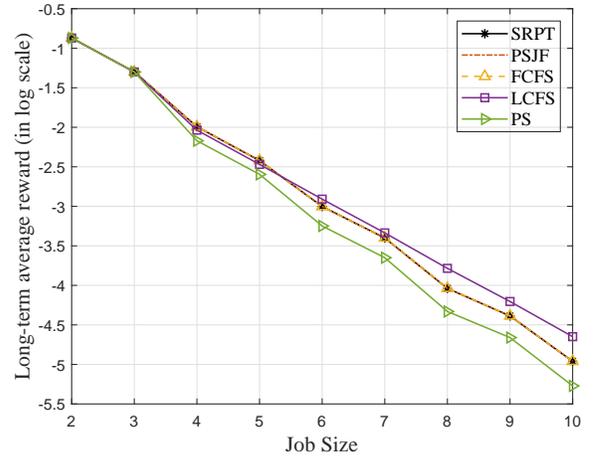}
    \caption{Job size vs. long-term average reward for $\kappa=1$}
    \label{fig:toy_illustration}
\end{figure}

For this arrival sequence with $\delta=\lfloor \frac{j}{2} \rfloor$ and $\kappa=1$, figure \ref{fig:toy_illustration} shows the performance of well-known disciplines: SRPT, PSJF, FCFS, last come first serve (LCFS), and processor sharing (PS). Since the jobs are of same size $j$, some disciplines perform the same. However, this is not the case for all arrival sequences. For $j=4$, FCFS, SRPT, and PSJF yield higher long-term average rewards, whereas LCFS dominates for $j\ge 6$. It is therefore evident from figure \ref{fig:toy_illustration} that even for a fixed $\kappa$, the performance of the aforementioned disciplines varies according to the job sizes. Next, we propose a new service discipline named shortest predicted sojourn time (SPST), which performs better than FCFS, LCFS, PSJF, SRPT, and PS in simulation. We also provide an analytical comparison with FCFS. 

\subsection{Shortest predicted sojourn time (SPST)}
The server of a work conserving queue cycles between idle and busy periods, i.e., the periods when the queue is empty and when it is not, respectively. On a given sample path of the arrival process and a given realization of the job size sequence, the positions and duration of the busy and idle periods are the same for all work conserving policies. Moreover, for an arrival process with i.i.d inter-arrival times, the beginning of a busy period is a renewal (or regenerative) epoch. Thus, by the renewal reward theorem \cite{Gallager2013}, for maximizing the long-term average reward, it is enough to maximize the average total reward in a renewal cycle.

For a fast decaying $f(\cdot)$, the total reward in a renewal cycle is dominated by the jobs with the shortest sojourn time. Thus, the two main factors that ensure high total reward in a cycle are the minimum sojourn time across all jobs in that cycle and the number of jobs whose sojourn time is equal to or close to that. 

As the future arrivals and job sizes are not known while making the service decision, intuitively, the best one can do is to serve the job whose completion would result into the shortest sojourn time among the existing jobs. This may increase the sojourn times of other jobs. However, as they are not the dominating terms in the total reward, the overall reward would be high.

Based on the above insights, we design the following policy, which we call shortest predicted sojourn time (SPST).



\begin{definition}[Predicted sojourn time]
Predicted sojourn time of a job at index $i$ at time $t \ge 0$ under a service discipline $\pi$, denoted by $P_\pi^{(t,i)}$, is its sojourn time if it is chosen by the server at time $t$ and is run to completion without preemption.
\end{definition}

\begin{definition}[Shortest predicted sojourn time (SPST)]\label{definition_SPST}
Under SPST, at every time instant, the job in service is the one with the shortest predicted sojourn time. That is, at any time $t$, the index of the job in service is
\begin{align*}
    k=\argmin\limits_i P_{SPST}^{(t,i)}.
\end{align*}
In case of a tie, the job with the least arrival time is prioritized.
\end{definition}

\subsection{Performance of SPST and other disciplines}\label{simulation_results}

In this subsection, the performance of SPST is compared with that of other well-known service disciplines. The long-term average rewards are plotted on a log scale for better visualization. We consider that the reward associated with each job is of the form $f(W)=\exp{(-\kappa W)}$ for some $\kappa>0$. Figures \ref{fig:toy_illustration_SPST} and \ref{fig:interarrival} depict the performance of disciplines when jobs of same size arrive with interarrival times $\{Y_i: i\ge 0\}$. We consider $\delta=\lfloor \frac{j}{2} \rfloor$ for the simulations. It is noted that SPST performs better than the existing disciplines for all job sizes for $\kappa=1$.  

\begin{figure}[hbpt!]
    \centering
    \includegraphics[scale=0.58]{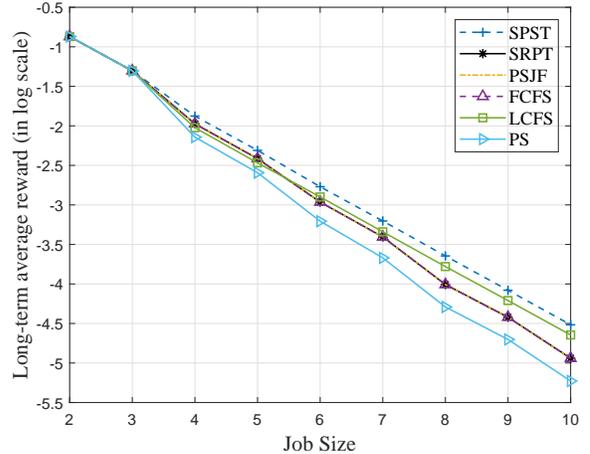}
    \caption{Job size vs. long-term average reward for $\kappa=1$. An illustration of the performance of SPST discipline.}
    \label{fig:toy_illustration_SPST}
\end{figure}

\begin{figure*}[hbpt!]
\centering
\captionsetup{justification=centering}
\subfloat[Job size vs. long-term average reward for $\kappa=1$\label{Bern_inter1}]{
\includegraphics[scale=0.58]{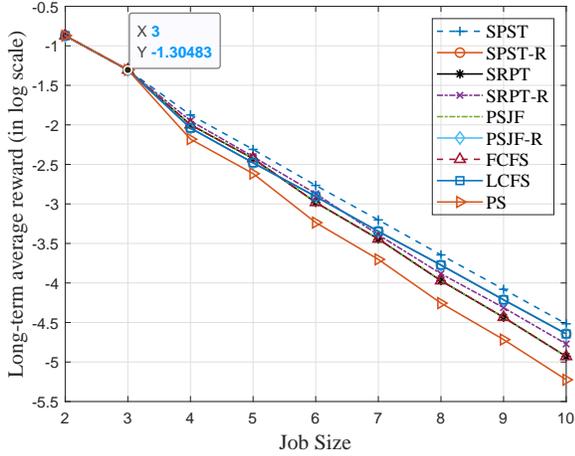}
}
\subfloat[$\kappa$ vs. long-term average reward for $j=4$. \label{Bern_inter2}]{
\includegraphics[scale=0.58]{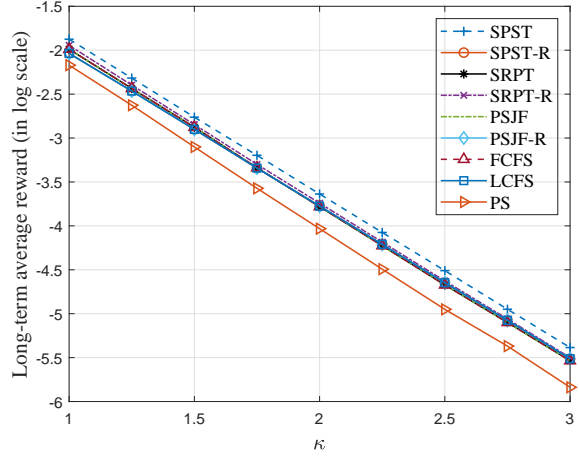}
}\\

\caption{An illustration for the case of job arrivals with $\{Y_i, i\in \mathbb{N}\}$ for $\delta=\lfloor \frac{j}{2} \rfloor$.}%
\label{fig:interarrival}
\end{figure*}

\begin{figure*}[hbpt!]
\centering
\captionsetup{justification=centering}
\subfloat[Job size vs. long-term average reward for $\kappa=1$\label{Bernoulli1}]{
\includegraphics[scale=0.58]{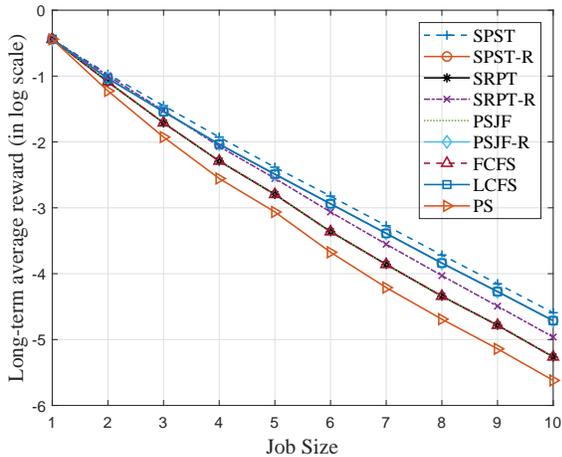}
}
\subfloat[$\kappa$ vs. long-term average reward for $j=4$. \label{Bernoulli2}]{
\includegraphics[scale=0.58]{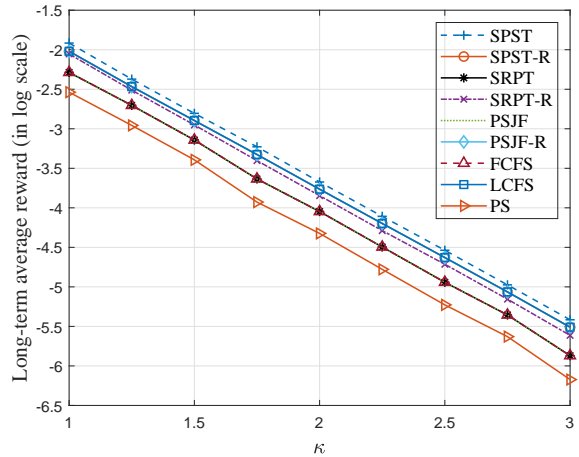}
}\\
\caption{An illustration for the case of Bernoulli arrivals with probability of arrival $\frac{1}{j+1}$.}%
\label{fig:Bernoulli}
\end{figure*}

By convention, the job with the least arrival time is prioritized for service in case of a tie under any service discipline. However, in our reward-based queue setup with $f(W)=\exp{(-\kappa W)}$, the tie-breaking criterion has to be suitably chosen to exploit the contribution of smaller jobs to the accumulated reward of the system. So, we also simulate the disciplines with a tie-breaker that prioritizes the most recent job for service. The suffix \emph{-R} represents the discipline with this tie-breaker. e.g., SPST-R. 

Figure \ref{Bern_inter1} depicts the performance of the disciplines along with their tie-breaking variant. With $j=2$, any busy period is $2$ irrespective of the service discipline, and hence, their long-term average rewards are the same. In addition, for any $i$, $Y_i$ is either 2 or 4 with equal probability, which is why the long-term average reward is 0.135. The same applies for $j=3$, in which case the long-term average reward is 0.0497. It is noted that SPST still yields rewards higher than that of any of its contenders for $\kappa=1$. In particular, for $\kappa \ge 1$, SPST is a clear winner for all sample paths regardless of $j$ as shown in \ref{Bern_inter2}. 

Figures \ref{Bernoulli1} and \ref{Bernoulli2} illustrate a more general case of Bernoulli arrivals with jobs of fixed size $j$. To ensure stability of the queue, we take arrival rate to be $\frac{1}{j+1}$. It can be seen that, even in this case, the performance of SPST is clearly better than the other policies for $\kappa \ge 1$. 

Since the jobs are of the same size in either case of arrival sequences, some disciplines perform equally. As seen in figures \ref{fig:interarrival} and \ref{fig:Bernoulli},
SPST-R, PSJF-R, and LCFS show similar performance. Likewise, the performances of SRPT, PSJF, and FCFS are similar. In addition, it is evident that the performance of PS is worse than that of SPST for $\kappa\ge 1$. This could mainly be due to the time-sharing of jobs under PS. It is also observed that SRPT-R and PSJF-R show better performance when compared to their respective conventional variants. However, under SPST, when same-sized jobs arrive in a sequence, there can never be a tie between two jobs waiting in the queue based on their predicted sojourn times. Only the job in service and a job in the queue are tied on this basis, in which case priority to the job in service yields better rewards. On the other hand, for more realistic arrival models with different job sizes, the server can choose a tie-breaker under SPST depending on the secondary performance measure such as expected slowdown\cite[Chapter~28]{Balter2013}.

Although simulations suggest that SPST is better and may even be an optimal policy for all arrival sequences, proving such guarantees are extremely hard. In the next section, we analytically prove that SPST performs better than FCFS. It will be evident that even this comparison is quite challenging due to intricate combinatorial structures.

\subsection{Analytical guarantee}\label{analytical_results}
For theoretical analysis, we consider a queuing system in which jobs of same size, $j$, arrive with interarrival times $\{Y_i: i\ge 0\}$. Recall
\begin{equation*}
    Y_i=
    \begin{cases}
    j_1=j+1-\delta \quad \text{w.p} \; \frac{1}{2}\\
    j_2=j+1+\delta \quad \text{otherwise}
    \end{cases}
\end{equation*}
We consider that the reward associated with each job is of the form $f(W)=\exp{(-\kappa W)}$ for $\kappa>0$. Under a stationary service discipline $\pi$, let $\bar{r}_{\pi}:=\lim_{n \to \infty} \frac{1}{n}\sum_{i=1}^n f(W_i)_\pi$, i.e., the long-term average reward. The following proposition is the main result of this section. 

\begin{Proposition}
For the defined queuing system with $\delta \le \frac{j}{2}$ and $f(W)=\exp{(-\kappa W})$, $\bar{r}_{SPST} \ge \bar{r}_{FCFS}$ for all $\kappa \ge \log_e 2$.\label{prop2}
\end{Proposition}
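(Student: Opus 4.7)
My plan is to reduce the claim to a per-busy-period comparison via the renewal reward theorem and then analyze the combinatorial sojourn-time structure induced by SPST under this specific arrival model. Since both SPST and FCFS are work conserving and the $\{Y_i\}$ are i.i.d., the sample path decomposes into i.i.d.\ busy periods (BPs) whose start and end times depend only on the arrival sequence. By the renewal reward theorem,
\[
\bar{r}_\pi \;=\; \frac{\mathbb{E}\bigl[\sum_{k=1}^{K} e^{-\kappa W_k^{\pi}}\bigr]}{\mathbb{E}[K]},
\]
where $K$ is the random number of jobs in a BP. Since $\mathbb{E}[K]$ does not depend on $\pi$, it suffices to show $\mathbb{E}\bigl[\sum_{k=1}^K e^{-\kappa W_k^{\mathrm{SPST}}}\bigr] \ge \mathbb{E}\bigl[\sum_{k=1}^K e^{-\kappa W_k^{\mathrm{FCFS}}}\bigr]$.

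The next step is to characterize SPST dynamics for equal-sized jobs. A newly arrived job always has predicted sojourn time $j$, whereas an in-service job that began at $t_s$ after arriving at $a\le t_s$ has predicted sojourn $j + (t_s - a)$. Combined with the oldest-arrival tie-breaker, this implies that a new arrival preempts iff the in-service job had strictly positive pre-service wait. Two consequences follow: (i) the first job of every BP is never preempted and so has sojourn exactly $j$; and (ii) any job that preempts a waited predecessor itself begins with zero wait and is subsequently protected from preemption, since every later arrival ties with it and loses the tie. This yields a near-alternating pattern within each BP in which several jobs obtain sojourn exactly $j$, while the remaining ``delayed'' jobs accumulate a larger sojourn that can be as long as the whole BP.

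Using this description, I would enumerate BP types by $K$ and by the arrival pattern $(Y_1,\ldots,Y_K) \in \{j_1, j_2\}^K$, noting that only $\delta \ge 2$ yields non-trivial BPs under this model. Cases $K = 1$ and $K = 2$ give identical SPST and FCFS sojourn profiles, so they contribute equally to both sides. For $K \ge 3$, SPST places roughly $\lceil K/2\rceil$ jobs at sojourn $j$ while FCFS places only one, at the cost of enlarging the sojourn of the delayed jobs. The per-BP comparison then reduces to verifying pairwise inequalities of the form
\[
e^{-\kappa j} + e^{-\kappa W_L} \;\ge\; e^{-\kappa W_a} + e^{-\kappa W_b},
\]
where $W_a, W_b$ are FCFS sojourns of non-first jobs (hence $\ge j+1$, since BP continuation forces $\tau_k < (k-1)j$ and thus $W_k^{\mathrm{FCFS}} = kj - \tau_k \ge j+1$ for $k \ge 2$) and $W_L$ is the corresponding large SPST sojourn. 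This is in turn implied by $e^{-\kappa j} \ge 2 e^{-\kappa(j+1)}$, that is, by $\kappa \ge \log_e 2$---exactly the hypothesis of the proposition. The threshold is tight in the worst-case regime $\delta = 2$, where non-first FCFS sojourns are as small as $j+1$.

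The main obstacle I anticipate is making the near-alternating structural claim fully rigorous for every admissible $Y$-sequence. The clean alternation suggested by small examples can be disrupted when an arrival lands after a protected job has completed but before a delayed job has resumed, reshuffling which job becomes protected next. These edge cases are constrained by the hypothesis $\delta \le j/2$, which bounds how tightly arrivals can pack inside a BP, but verifying the inequality in each configuration will require careful bookkeeping---likely an induction on $K$ that explicitly tracks the ``wait / no-wait'' status of the current service. If a sample-path inequality BP-by-BP proves too strong, I would instead compare expected per-BP rewards conditional on $K$, summing over admissible configurations with their probability $2^{-K}$ and leaning on the geometric decay together with $\kappa \ge \log_e 2$ to absorb any sign changes.
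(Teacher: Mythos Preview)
Your overall strategy coincides with the paper's: reduce via the renewal reward theorem to a sample-path comparison of total reward within a single busy period, show that under SPST at least $\lceil K/2\rceil$ jobs attain sojourn exactly $j$, bound every non-first FCFS sojourn below by $j+1$, and conclude from $e^{-\kappa j}\ge 2e^{-\kappa(j+1)}$. The paper does establish the per-busy-period inequality on every sample path (its Theorem~2), so your fallback to an averaged comparison is unnecessary.

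There is, however, a concrete gap that you do not identify. Your pairing scheme---one SPST priority job plus one SPST delayed job against two FCFS non-priority jobs---balances only when $K$ is odd. For even $K$, with exactly $K/2$ SPST priority jobs, after matching the single FCFS priority job and forming $K/2-1$ pairs you are left with one SPST delayed job against one FCFS non-priority job, and $e^{-\kappa W_L}\ge e^{-\kappa(j+1)}$ is false since $W_L$ can be much larger than $j+1$. Equivalently, the crude bound $(K/2-1)e^{-\kappa j}-(K-1)e^{-\kappa(j+1)}$ is \emph{negative} at $\kappa=\ln 2$. The paper repairs this with two additional structural facts for even $K$: either every run of consecutive $j_1$-interarrivals has even length, in which case SPST actually has at least $K/2+1$ priority jobs; or some such run has odd length, in which case SPST has a job with sojourn exactly $j+\delta-1$, while simultaneously (for $K\ge 3$) FCFS has at least one job with sojourn $\ge j+\delta$. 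Matching that SPST $(j+\delta-1)$ job against the FCFS $(\ge j+\delta)$ job absorbs the leftover and restores the inequality at $\kappa=\ln 2$. The difficulty you \emph{do} flag---making the $\lceil K/2\rceil$ count rigorous---is real and is handled in the paper by decomposing the interarrival sequence into alternating blocks of $j_1$'s and $j_2$'s and counting priority jobs block by block; but it is the even-$K$ refinement, not this count, that your plan is missing.
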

Proposition \ref{prop2} is a direct consequence of theorem \ref{theorem1}. A better understanding of this relationship requires the following definitions. 

\begin{definition}[Busy period] The time from when the server is busy until it becomes idle. \label{BusyPeriod}
\end{definition}
\begin{definition}[Busy period length]\label{BusyPeriodLength}
The number of jobs in a busy period is called its length, $L$.
\end{definition}
\begin{definition}[Idle period] The time from when the server is idle until it becomes busy. \label{IdlePeriod}
\end{definition}

Let $R_\pi$ denote the accumulated reward in an arbitrary busy period under service discipline $\pi$. That is, for a busy period of length $n$, $R_{\pi}=\sum\limits_{i=1}^n f(W_i)_\pi$. We denote the number of arrivals till $t$ by $\mathcal{A}(t)$. Then, $\bar{r}_{\pi}=\lim \limits_{t\xrightarrow{}\infty}\frac{1}{\mathcal{A}(t)}\sum \limits_{j=1}^{\mathcal{A}(t)} f(W_j)_{\pi}$. 

\begin{theorem}
For the defined queuing system with $\delta \le \frac{j}{2}$ and $f(W)=\exp{(-\kappa W})$, $R_{SPST}\ge R_{FCFS}$ for $\kappa \ge \log_e 2$. \label{theorem1}
\end{theorem}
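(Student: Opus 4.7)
The plan is to reduce to a sample-path comparison inside a single busy period. Since both policies are work-conserving, the busy-period length $L$ and the internal interarrival sequence $(Y_1,\dots,Y_{L-1})\in\{j_1,j_2\}^{L-1}$ are identical under SPST and FCFS (subject to $T_k=\sum_{i<k}Y_i\le(k-1)j$ for every $k\le L$ and $T_{L+1}>Lj$), so it suffices to prove $R_{SPST}\ge R_{FCFS}$ for every admissible realization. The cases $L=1$ and $L=2$ are trivial, since both policies then serve jobs in arrival order; the substantive regime is $L\ge 3$, in which the hypothesis $\delta\le j/2$ forces $T_L\ge 2j_1\ge j+2>j$.

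Under FCFS the sojourn of the $k$-th job in arrival order is $W_k^F=kj-T_k$, and in particular $W_L^F\ge j$. For SPST I would exploit that, with equal job sizes $j$, the predicted sojourn of a job $i$ at time $t$ equals $(t-a_i)+r_i$, which is exactly $j$ while $i$ has been served continuously since its arrival and strictly exceeds $j$ once $i$ has accumulated any waiting. Tracking the evolution over the busy period then yields three structural facts when $L\ge 3$: (i) job 1 is never preempted, hence $W_1^S=j$; (ii) after job 1 completes at time $j$, each subsequent arrival has predicted sojourn $j$ while the currently serving (previously waited) job has predicted sojourn strictly greater than $j$, so every new arrival preempts, producing a preemptive-LIFO cascade; (iii) because $T_L>j$, job $L$ preempts on arrival and runs uninterrupted to completion, so $W_L^S=j$. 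The intermediate jobs $2,\dots,L-1$ are cleared from their residual fragments after job $L$ finishes.

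With $W_1^S=W_L^S=j$ I would decompose
\[
R_{SPST}-R_{FCFS}=\bigl(e^{-\kappa j}-e^{-\kappa W_L^F}\bigr)+\sum_{k=2}^{L-1}\bigl(e^{-\kappa W_k^S}-e^{-\kappa W_k^F}\bigr).
\]
The first bracket equals $e^{-\kappa j}\bigl(1-e^{-\kappa(W_L^F-j)}\bigr)\ge 0$ and quantifies SPST's gain on the last job. Each summand on the right is bounded in absolute value by $e^{-\kappa W_k^F}\le e^{-\kappa j}$; along the FCFS-adversarial pattern $Y_i\equiv j_1$ one has $W_k^F=j+(k-1)(\delta-1)$ strictly increasing in $k$, so the sum of negative terms is dominated by a geometric series of ratio at most $e^{-\kappa}\le 1/2$ when $\kappa\ge\log_e 2$. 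This tail is then absorbed by the last-job gain $e^{-\kappa j}\bigl(1-e^{-\kappa(L-1)(\delta-1)}\bigr)$, yielding a non-negative difference, and the analysis extends to mixed interarrival patterns by tracking how $W_k^F$ and $W_L^F$ respond to each $Y_i$ taking the larger value $j_2$.

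The principal obstacle is the explicit description of the intermediate $W_k^S$: the preemption cascade and the order in which residual fragments are cleared after job $L$ completes depend combinatorially on the specific sequence of $j_1$s and $j_2$s. I expect either an induction on $L$ (peeling off jobs 1 and $L$, both with SPST-sojourn $j$, and reducing the residual clearance to a shorter busy period) or a careful case enumeration grouping interarrival patterns by their preemption structure, with the threshold $\kappa=\log_e 2$ emerging as the tight value at which the worst-case intermediate loss is exactly balanced by the last-job gain.
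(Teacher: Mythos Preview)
Your structural description of SPST in item~(ii) is incorrect, and this is the crux of the gap. You claim that after job~$1$ completes, SPST behaves as preemptive-LIFO because ``the currently serving (previously waited) job has predicted sojourn strictly greater than $j$''. But this premise fails the moment one preemption has occurred: a job that preempted on arrival has accumulated \emph{no} waiting time, so its predicted sojourn stays at exactly $j$. A subsequent arrival then \emph{ties} at predicted sojourn $j$, and by the tie-breaking rule (earlier arrival wins) the currently serving job continues and completes with sojourn exactly $j$. Concretely, with $j=6$, $\delta=3$, $j_1=4$ and $Y_1=Y_2=Y_3=j_1$: job~$3$ preempts job~$2$ at $t=8$; job~$4$ arrives at $t=12$ with predicted sojourn $6$, but job~$3$ also has predicted sojourn $6$ (served continuously on $[8,12]$), so job~$3$ is \emph{not} preempted and finishes at $t=14$ with $W_3^S=j$. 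Thus SPST does not cascade LIFO; instead roughly every other arrival becomes a priority job, which is exactly what the paper establishes as Lemma~\ref{lemma1} ($n_{SPST}^P\ge\lceil n/2\rceil$) via a block decomposition of the interarrival sequence.

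This matters because your accounting credits SPST with only two sojourn-$j$ jobs (jobs~$1$ and~$L$), so your lower bound on $R_{SPST}$ is essentially $2e^{-\kappa j}$. Against $R_{FCFS}$ your proposed control of the intermediate sum relies on $W_k^F$ growing fast enough to form a summable geometric tail, but under FCFS $W_k^F$ can increase by as little as $\delta-1$ per step, and your ``extend to mixed patterns'' remark is not an argument. By contrast, the paper uses the $\lceil n/2\rceil$ priority jobs to obtain $R_{SPST}\ge\lceil n/2\rceil e^{-\kappa j}$, pairs this with the crude FCFS bound $R_{FCFS}\le e^{-\kappa j}+(n-1)e^{-\kappa(j+1)}$ from Lemma~\ref{lemma4}, and the difference factors as $\lfloor n/2\rfloor e^{-\kappa j}(1-2e^{-\kappa})$, which is where the threshold $\kappa=\log_e 2$ comes from. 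Your decomposition cannot reach this without the $\Theta(n)$ count of priority jobs; as written, the intermediate loss can be of order $(n-2)e^{-\kappa(j+1)}$ while your gain term is at most $e^{-\kappa j}$, and these do not balance for large $n$.
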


By renewal reward theorem, we have
\begin{align*}
    \bar{r}_{\pi}&=\frac{\mathbb{E}[R_\pi]}{\lambda\mathbb{E}\text{[busy period + idle period]}}
\end{align*}
where $\lambda=\frac{1}{j+1}$ denotes the arrival rate of the jobs. We note that $\mathbb{E}\text{[busy period + idle period]}$ is the same for all work conserving disciplines. Therefore, by theorem \ref{theorem1}, we have
$\bar{r}_{SPST} \ge \bar{r}_{FCFS}$ for all $\kappa \ge \log_e 2$.

\section{Proof of Theorem \ref{theorem1}}\label{sequential:theorem}

Recall definitions \ref{BusyPeriod} and \ref{BusyPeriodLength}. The following are the observations with respect to a busy period for the case of job arrivals with  $\{Y_i, i\in \mathbb{N}\}$ defined earlier.
\begin{enumerate}[(i)]
    \item If the first inter-arrival time, $Y_1$, is $j_2$, then the busy period is $j$. In this case, any work conserving discipline yields the same reward, $\exp{(-\kappa j)}, \kappa>0$. \label{obs1}
    \item If $Y_1=j_1$, then $L>1$ for $\delta>1$. \label{obs2}
    \item A busy period has ended if $ (k_1+k_2)j \le k_1j_1+k_2j_2$ for $k_1, k_2 \ge 0$. This is because, in any work conserving discipline, the total work in a busy period cannot be greater than the busy period itself. \label{condition_busy}
\end{enumerate}

\begin{remark}\label{remark1}
Following observation \ref{condition_busy}, before a busy period ends, $k_2$ cannot be larger than $k_1$. However $k_1 \le k_2$ is only a sufficient condition for a busy period to end.
\end{remark}

We use the following lemmas to prove theorem \ref{theorem1}.

\begin{definition}[Priority job]
A job of size $j$ is called a priority job under any discipline if its sojourn time is $j$. In other words, a priority job neither waits nor is preempted until it is run to completion. 
\end{definition}

\begin{lemma}\label{lemma1}
Under SPST, there are at least $\lceil \frac{n}{2} \rceil$ priority jobs for $\delta \le \frac{j}{2}$.
\end{lemma}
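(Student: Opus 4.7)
My plan is to exhibit an injection from the non-priority jobs of a busy period into the priority jobs, which immediately yields $|\text{priority}| \ge \lceil n/2 \rceil$. First I would establish that under SPST, being a priority job is equivalent to being \emph{fresh}, i.e., entering service at its arrival epoch. For any fresh job, the predicted sojourn time equals $j$ throughout its service; every subsequent arrival also has predicted sojourn $j$, so SPST's tie-break in favor of the earlier arrival keeps the fresh job on the server until completion, giving sojourn $j$. Conversely, a priority job has sojourn $j$, which forces its entire time in the system to be continuous service with zero waiting, hence freshness.

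Next I would characterize arrival behavior. The job in service at any instant has predicted sojourn equal to $j$ plus its accumulated wait, so this predicted sojourn equals $j$ iff the in-service job is fresh, and strictly exceeds $j$ otherwise. A newly arriving job has predicted sojourn $j$, so SPST makes the arrival fresh (via preemption) iff the in-service job is non-fresh, and makes the arrival non-fresh (incumbent wins the tie) iff the in-service job is fresh. Consequently, every non-fresh job in the busy period arrives while some fresh job is being served.

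Finally, I would invoke the hypothesis $\delta \le j/2$, which gives $2 j_1 = 2(j+1-\delta) \ge j + 2 > j$; hence within any duration-$j$ service window of a fresh job, at most one new arrival can occur. Define $\phi$ from non-fresh to fresh jobs of the busy period by mapping each non-fresh job to the fresh job in service at its arrival epoch. The previous paragraph shows $\phi$ is well-defined, and the at-most-one-arrival bound shows it is injective; hence $|\text{non-fresh}| \le |\text{fresh}|$, which combined with $|\text{non-fresh}| + |\text{fresh}| = n$ gives $|\text{fresh}| \ge \lceil n/2 \rceil$. The main obstacle I anticipate is a careful treatment of boundary coincidences between arrivals and completions in the discrete-time model (e.g., $\delta = 1$ with $Y = j_1 = j$); here I would argue, using a convention consistent with SPST's tie-breaker, that the arriving job still effectively starts fresh, so both the well-definedness and injectivity of $\phi$ are preserved.
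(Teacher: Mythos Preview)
Your injection argument is correct and genuinely different from the paper's proof. The paper does not argue via a matching at all; instead it introduces a block decomposition of the busy period into ``A-blocks'' (runs of arrivals preceded by interarrival $j_1$) and ``B-blocks'' (runs preceded by $j_2$), establishes the exact count $n_{SPST}^P=\sum_i \lfloor n(A_i)/2 \rfloor + n_{j_2}+1$ via Claim~\ref{claimSPST}, and then lower-bounds this sum through a case analysis on $n_A$ versus $n_B$ together with several parity subcases. Your route is more economical for Lemma~\ref{lemma1} itself: the equivalence ``priority $=$ fresh'' and the bound $2j_1>j$ yield the injection directly, with no bookkeeping on block counts or parities. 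What the paper's approach buys in return is infrastructure: the block decomposition and the explicit formula for $n_{SPST}^P$ are reused verbatim in the proofs of Lemmas~\ref{lemma2} and~\ref{lemma3}, which require finer structural information (locating a job with sojourn $j+\delta-1$ inside an odd A-block, and squeezing out one extra priority job when all $n(A_i)$ are even) that your injection does not immediately expose. Your anticipated boundary worry at $\delta=1$ is harmless, as you suspected: with $j_1=j$ every arrival finds the server free and is fresh, so there are no non-fresh jobs to map.
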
 

The following definitions are instrumental to understanding lemma \ref{lemma1} and the subsequent lemmas. Proof of the lemmas are given in the Appendix.

\begin{definition}[Block A]
The consecutive jobs that follow the interarrival time $j_1$ form a block A. 
\end{definition}

\begin{definition}[Block B]
The consecutive jobs that follow the interarrival time $j_2$ form a block B. 
\end{definition}

Let $n_A$ and $n_B$ denote the number of blocks A and B in the busy period respectively. We consider that $A_k$ denotes the $k^{th}$ block A, $n(A_k)$ denotes the number of jobs in the $k^{th}$ block A and $A_k^i$ denotes the $i^{th}$ job in the $k^{th}$ block A, with the similar interpretation for block B. Let $n_{j_1}$ and $n_{j_2}$ represent the total number of jobs in blocks A and B respectively. i.e., $\sum \limits_{i=1}^{n_A}n(A_i)=n_{j_1}$ and $\sum \limits_{i=1}^{n_B}n(B_i)=n_{j_2}$. Let $n_{\pi}^P$ represent the number of priority jobs in the busy period under a service discipline $\pi$.

\begin{lemma}
For any $A_i$ with odd $n(A_i)$, there exists a job whose sojourn time under SPST is $j+\delta-1$. 
\label{lemma2}
\end{lemma}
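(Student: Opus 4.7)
The plan is to show that when $n(A_i)$ is odd, the \emph{last} job of $A_i$ has sojourn $j+\delta-1$ under SPST. Label the jobs of $A_i$ in arrival order as $a_1,\ldots,a_n$, arriving at $t_0,\ t_0+j_1,\ \ldots,\ t_0+(n-1)j_1$ with $n=n(A_i)$. The first step is to characterize the system state at $t_0$: since $j_2>j$, every block-$B$ job is served as a priority job from its own arrival time (a fresh $B$-arrival has predicted sojourn $j$, which beats any surviving older residual and in particular preempts the incumbent if needed). Consequently the last arrival of $B_{i-1}$, which arrived $j_1$ units before $t_0$, is in service at $t_0$ with residual $j-j_1=\delta-1$. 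Older residuals from preempted jobs in earlier $A$-blocks may still be in queue, but their arrivals lie sufficiently far in the past that their SPST-predicted sojourns strictly dominate every comparison inside $A_i$; the hypothesis $\delta\le j/2$ controls this separation.

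Next I would trace SPST through $A_i$ step by step. At $t_0$ the incumbent and $a_1$ tie at predicted sojourn $j$ and the tie-break keeps the incumbent in service, so $a_1$ enters service only at $t_0+\delta-1$. At $t_0+j_1$, $a_1$ has residual $2\delta-2$ and predicted sojourn $j+\delta-1>j$, so $a_2$ preempts $a_1$. At $t_0+2j_1$, $a_2$'s residual is $\delta-1$; it ties with fresh $a_3$ at predicted sojourn $j$, wins the tie, and completes as a priority job with sojourn $j$. The same preempt-then-priority pattern repeats pairwise, so for every $k$, $a_{2k-1}$ ends up preempted with residual $2\delta-2$ and $a_{2k}$ completes with sojourn $j$. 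Throughout, neither the queued preempted $A_i$-jobs (smallest predicted sojourn $2j+\delta-1$) nor the older backlog ever beats a fresh $A_i$-arrival.

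When $n=2k+1$ is odd there is one unpaired $a_n$. After $a_{n-1}$ completes at $t_0+(n-2)j_1+j$, the fresh $a_n$ has predicted sojourn $j+\delta-1$, which strictly beats every queued residual, so $a_n$ enters service immediately. Since the next external arrival is at $t_0+(n-1)j_1+j_2$ and exceeds its completion time $t_0+(n-2)j_1+2j$ by $j_1+j_2-2j=2>0$, $a_n$ runs without preemption and attains sojourn $2j-j_1=j+\delta-1$. The main obstacle I anticipate is the opening step: rigorously proving that old residuals from earlier $A$-blocks never enter SPST's comparisons inside $A_i$. Making this precise needs an inductive bookkeeping of backlog across successive $A$--$B$--$A$ transitions in which $\delta\le j/2$ is essential; once that separation is established, the pairing and the closing sojourn calculation are straightforward deterministic computations.
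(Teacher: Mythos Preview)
Your proposal is correct and lands on the same witness as the paper: the last job $A_i^{n(A_i)}$ of an odd $A$-block, which waits $\delta-1$ after the priority job $A_i^{n(A_i)-1}$ and then runs uninterrupted because $j_1+j_2-2j=2>0$. The only real difference is packaging: the paper does not trace the pairwise preempt-then-priority dynamics through $A_i$ but simply invokes Claim~\ref{claimSPST}\ref{claim1.5} (even-indexed $A$-jobs are priority for $\delta\le j/2$) to get that $A_i^{n(A_i)-1}$ is a priority job, and then does the same closing computation you do. What you call the ``main obstacle'' --- that old residuals from earlier $A$-blocks never win an SPST comparison inside $A_i$ --- is precisely the content the paper folds into the one-line proof of Claim~\ref{claimSPST}\ref{claim1.5} via the observation $Y_{k-1}+Y_k\ge 2j_1>j$; any backlogged job predating the incumbent has time-in-system exceeding $2j_1>j$, so its predicted sojourn cannot beat a fresh arrival's $j$, and the rest of your trace then goes through. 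A minor point: your opening sentence assumes a preceding block $B_{i-1}$, but for $i=1$ the incumbent is the first job of the busy period; the same residual-$(\delta-1)$ calculation applies unchanged.
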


\begin{lemma}
If, in a busy period, $n$ is even and $n(A_i)$ is even for every $i$, then $n_{SPST}^P \ge \frac{n}{2}+1$.  
\label{lemma3}
\end{lemma}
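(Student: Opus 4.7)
The plan is to establish the claim through a fine-grained accounting of priority jobs across the busy period's structural decomposition: the seed Job 1, the A-blocks, and the B-blocks. Under the parity hypotheses, this accounting yields a lower bound that beats $n/2$ by one.

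First I would observe that, under the hypotheses, $n_{j_1} = \sum_i n(A_i)$ is even, and the identity $n = 1 + n_{j_1} + n_{j_2}$ combined with $n$ even forces $n_{j_2}$ to be odd, hence $n_{j_2}\ge 1$. I then bound the priority count in three stages. \textbf{Seed.} Job 1 finds an empty server on arrival; at any instant $t\in(0,j)$ when a new job arrives, Job 1's predicted sojourn $= t + (j-t) = j$ ties the arrival's predicted sojourn of $j$, so by SPST's tie-breaker (earliest arrival) Job 1 continues serving to completion. Hence the seed is a priority job. \textbf{A-blocks.} For each A-block of even size $2k$, I would track SPST's decisions across its $2k$ arrivals --- relying on $\delta \le j/2$ and hence $j_1 \le j$ --- to show that the block exhibits an alternating pattern in which each pair of consecutive $j_1$-spaced arrivals produces exactly one priority job (the later arrival preempts the delayed job currently being served, whose predicted exceeds $j$, and itself completes uninterrupted before another arrival appears). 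This contributes at least $k = n(A_i)/2$ priority jobs per block, totalling $n_{j_1}/2$. \textbf{B-blocks.} Fix a B-arrival at time $\tau$; by definition the previous arrival was at $\tau-j_2$ and no arrival occurs in the window $(\tau-j_2,\tau)$. If the server is idle at $\tau^-$, the B-job enters service at $\tau$, and the tie-breaker prevents any later arrival from preempting it (predicted ties at $j$ in favour of the earlier B-job). If the server is busy with some job $X$ at $\tau^-$, then $X$ must have arrived no later than $\tau-j_2$, so its age at $\tau$ is $\ge j_2>j$; since $X$'s remaining service is positive, its predicted sojourn strictly exceeds $j$, and SPST preempts $X$ for the B-job, which then runs uninterrupted. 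Hence all $n_{j_2}$ B-jobs are priority.

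Combining the three bounds,
\begin{equation*}
n_{SPST}^P \;\ge\; 1 + \frac{n_{j_1}}{2} + n_{j_2} \;=\; \frac{n+1+n_{j_2}}{2} \;\ge\; \frac{n}{2}+1,
\end{equation*}
using $n_{j_2}\ge 1$. The main obstacle will be the A-block claim, because establishing the alternating priority pattern rigorously requires tracking how the block's starting server state (which may inherit unserved residuals from previous blocks) interacts with the subsequent $j_1$-spaced arrivals across all possible scenarios. The B-block step is cleaner once phrased via the ``no arrival in the window $(\tau-j_2,\tau)$'' observation, but one must be careful to define age as the time since the original arrival (not since the current service episode), so that any in-service job at $\tau$ genuinely inherits the age lower bound of $j_2$.
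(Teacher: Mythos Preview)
Your proposal is correct and coincides with the paper's approach: both rest on the priority-job accounting $n_{SPST}^P \ge 1 + n_{j_1}/2 + n_{j_2}$ (seed plus even-indexed A-block jobs plus all B-block jobs, exactly the content of the paper's Claim~1), followed by the parity observation that $n_{j_1}$ even and $n-1=n_{j_1}+n_{j_2}$ force $n_{j_2}$ odd, hence $n_{j_2}\ge 1$, which yields the extra $+1$. One small correction to your A-block sketch: when $\delta\ge 2$ the even-indexed job does \emph{not} finish before the next arrival (since $j_1<j$); rather, the next arrival ties it at predicted sojourn $j$ and the tie-breaker (earlier arrival wins) lets the even-indexed job run to completion---the same mechanism you already invoked for the seed and B-block steps.
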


\begin{lemma}
\label{lemma4}
Under FCFS,
~\newline
    \begin{minipage}[t]{\linewidth}
    \begin{enumerate}[a)]
    \item There is only one priority job. \label{claimFCFSa}
    \item All other jobs have $W\ge j+1$. \label{claimFCFSb}
    \item For $n\ge 3$, at least one job has $W\ge j+\delta$.
    \end{enumerate}
\end{minipage}
\end{lemma}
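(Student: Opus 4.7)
The plan is to exploit a clean closed form for sojourn times under FCFS. Index the $n$ jobs of an arbitrary busy period as $1, 2, \ldots, n$ in order of arrival and take the busy period to start at time $0$. Under FCFS, job $i$ begins service exactly when job $i-1$ finishes, so it starts at time $(i-1)j$ and completes at time $ij$; since its arrival time is $\sum_{k=1}^{i-1} Y_k$,
\[
    W_i \;=\; ij - \sum_{k=1}^{i-1} Y_k \qquad (1 \le i \le n).
\]
Moreover, the very fact that job $i$ lies in the same busy period means it arrives strictly before job $i-1$ finishes, i.e.\ $\sum_{k=1}^{i-1} Y_k < (i-1)j$ for each $i \ge 2$; this is observation~(iii) specialised to FCFS.

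Given this, parts (a) and (b) are essentially one-liners. Since $W_1 = j$, job $1$ is a priority job; for any $i \ge 2$ the busy-period inequality gives $W_i > j$, so no other job is a priority job, establishing (a). For (b), because each $Y_k \in \{j+1-\delta, j+1+\delta\}$ is a positive integer and $j$ is a positive integer, $W_i$ is a positive integer; combined with $W_i > j$ this upgrades immediately to $W_i \ge j+1$.

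For (c), I would first rule out all but one configuration of $(Y_1, Y_2)$. If $Y_1 = j_2 = j+1+\delta$, then the server idles strictly before job $2$ arrives, so the busy period has length one, contradicting $n \ge 3$. If $Y_1 = j_1$ but $Y_2 = j_2$, then $Y_1 + Y_2 = 2(j+1) > 2j$ violates the busy-period inequality at $i=3$, again contradicting $n\ge 3$. Hence $Y_1 = Y_2 = j_1 = j+1-\delta$, and substituting into the formula for $W_3$,
\[
    W_3 \;=\; 3j - 2(j+1-\delta) \;=\; j + 2(\delta-1) \;\ge\; j + \delta,
\]
where the last step uses $\delta \ge 2$ (implicit from observation~(ii); for $\delta = 1$ one has $j_1 = j$ and no nontrivial busy period arises). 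I do not anticipate any serious obstacle: once the identity for $W_i$ and the busy-period inequality are in hand, parts (a) and (b) collapse to an integrality observation, and (c) to the short case analysis on $(Y_1, Y_2)$ above. The only point requiring a little care is confirming integrality of $W_i$ so that $W_i > j$ sharpens to $W_i \ge j+1$.
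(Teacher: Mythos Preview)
Your proof is correct and follows essentially the same line as the paper's. The paper uses the recursion $W_k = W_{k-1} - Y_{k-1} + j$ where you use the equivalent closed form $W_i = ij - \sum_{k<i} Y_k$; for part~(c) the paper appeals to Remark~1 to locate a $k\ge 3$ with $Y_{k-1}=j_1$ and then bounds $W_k \ge (j+1) - j_1 + j = j+\delta$, whereas your explicit case analysis on $(Y_1,Y_2)$ pins this down to $k=3$ directly and is arguably cleaner.
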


\begin{proof}[Proof of theorem \ref{theorem1}]
Let T denote the busy period of length $n$.
For $n<3$, $R_{SPST}=R_{FCFS}$.
For $n\ge 3$, from lemma \ref{lemma4},
\begin{align}
    R_{FCFS} &\le \exp{(-\kappa j)}+\exp{(-\kappa(j+\delta))}\nonumber\\
    &\quad+(n-2)\exp{(-\kappa(j+1))}\label{FCFS1} \\
    &\le \exp{(-\kappa j)}+(n-1)\exp{(-\kappa(j+1))}.\label{FCFS2}
\end{align}
If $n$ is odd, from lemma \ref{lemma1},
\begin{align}
    R_{SPST} &\ge \Big \lceil \frac{n}{2} \Big \rceil \exp{(-\kappa j)}+(n-\Big \lceil \frac{n}{2} \Big \rceil) \exp{(-\kappa T)}. \label{SPST1}
\end{align}
Using (\ref{FCFS2}) and (\ref{SPST1}), 
\begin{align}
    R_{SPST}&-R_{FCFS} \nonumber\\&\ge (\Big \lceil \frac{n}{2} \Big \rceil-1) \exp{(-\kappa j)}+(n-\Big \lceil \frac{n}{2} \Big \rceil) \exp{(-\kappa T)}\nonumber\\&\quad-(n-1)\exp{(-\kappa(j+1))}\nonumber\\
    &\ge (\Big \lceil \frac{n}{2} \Big \rceil-1) \exp{(-\kappa j)}-(n-1)\exp{(-\kappa(j+1))}\nonumber\\
    &\ge \Big \lfloor \frac{n}{2} \Big \rfloor \exp{(-\kappa j)} (1-2\exp{(-\kappa)}).\label{kappa1}
\end{align}
If $n$ is even, following lemmas \ref{lemma2} and \ref{lemma3},
\begin{align}
    R_{SPST} &\ge \frac{n}{2} \exp{(-\kappa j)}+\exp{(-\kappa (j+\delta-1))}\nonumber\\
    &\quad+(n-\frac{n}{2}-1) \exp{(-\kappa T)}. \label{SPST2}
\end{align}
Using (\ref{FCFS1}) and (\ref{SPST2}), 
\begin{align}
    R&_{SPST}-R_{FCFS}\nonumber\\&\ge \Big [\frac{n}{2}-1 \Big ] \exp{(-\kappa j)}+\exp{(-\kappa (j+\delta))}(\exp{(\kappa)}-1)\nonumber\\&\quad+(n-\frac{n}{2}-1) \exp{(-\kappa T)}-(n-2)\exp{(-\kappa(j+1))}\nonumber\\
    &\ge \Big [\frac{n}{2}-1 \Big ] \exp{(-\kappa j)}-(n-2)\exp{(-\kappa(j+1))}\nonumber\\
    &\ge \Big [\frac{n}{2}-1 \Big ]\exp{(-\kappa j)} (1-2\exp{(-\kappa)})\label{kappa2}.
\end{align}
From equations (\ref{kappa1}) and (\ref{kappa2}), $R_{SPST}\ge R_{FCFS}$ for $\kappa \ge \log_e 2$.
\end{proof}

Proposition \ref{prop2} states that the long-term average reward under SPST is more than that under FCFS for $\kappa\ge \log_e 2$. However it is also clear from (\ref{FCFS1}), (\ref{SPST1}), and (\ref{SPST2}) that  $\log_e 2$ is not a sharp threshold and obtaining a tight lower bound on $\kappa$ is far from simple. 

\section{Conclusion}
In this paper, we studied the problem of maximizing
the average nonlinear functions of sojourn times in work
conserving single server queuing systems and characterized
the performance of some well-known service disciplines. We
argued that identifying a single service discipline that
outperforms other disciplines for all arrival distributions and
job sizes appears to be a highly nontrivial problem. Indeed, an
optimal policy could depend on the specific functional form
of the nonlinear reward function. We also introduced a service
discipline, shortest predicted sojourn time (SPST), and pro-
vided analytical guarantees under specific settings. Numerical
experiments suggest that SPST performs well across multiple
settings, although it may not be optimal for all job sizes,
arrival distributions, and reward functions. As such, the general
problem setting remains largely open for further analytical
investigations.

\section*{Acknowledgement}
The first author's work was supported by the Prime Minister's Research Fellows (PMRF) scheme. The work of AC was supported in part by the Department of Science and Technology, Government of India under Grant SERB/SRG/2019/001809 and Grant INSPIRE/04/2016/001171. KJ acknowledges the Metro Area Quantum Access Network (MAQAN) project, supported by the Ministry of Electronics and Information Technology, India vide sanction number 13(33)/2020-CC\&BT. 

\section*{Appendix}\label{Appendix}
Recall the definitions and notations discussed in section \ref{Section:Sequential}. The following claim is required for the construction of the proof of~\cref{lemma1,lemma2,lemma3}.

\begin{Claim} \label{claimSPST} Under SPST discipline,~\newline
    \begin{minipage}[t]{\linewidth}
    \begin{enumerate}[a)]
    \item For a busy period of length $n$, $n_{j_1}+n_{j_2}=n-1$. \label{claim1.1}
    \item A busy period with $L>1$ always starts with block A. Also, every block B is preceded by a block A. That is, $n_A-n_B\in \{0,1\}$.\label{claim1.2} 
    \item For $\delta \le \frac{j}{2}$, the jobs in the even index of block A are priority jobs under SPST.\label{claim1.5}
    \item Every job of block B is a priority job under SPST. \label{claim1.6}
    
    \end{enumerate}
\end{minipage}
\end{Claim}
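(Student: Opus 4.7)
The plan is to tackle the four parts of Claim~\ref{claimSPST} in order, with (a) and (b) being combinatorial and (c)/(d) requiring a careful analysis of SPST's decisions. For (a), I would count interarrivals: the first job of the busy period arrives to an empty server and so does not follow any $Y_i$ within the period, while each of the remaining $n-1$ jobs follows some $Y_i\in\{j_1,j_2\}$ and is classified into block A or block B, giving $n_{j_1}+n_{j_2}=n-1$. For (b), since $j_2=j+1+\delta>j$, if $Y_1=j_2$ the server is idle for at least one slot before job~2 arrives, forcing $L=1$; hence $L>1$ implies $Y_1=j_1$ so the first block is of type A. Blocks being by definition maximal constant runs of interarrivals, consecutive blocks must alternate in type, giving $n_A-n_B\in\{0,1\}$.

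For (c) and (d), my plan rests on two analytic observations about SPST. First, the predicted sojourn time $P_{SPST}^{(t,i)}$ equals the elapsed in-system time of job~$i$ plus its remaining work, and is therefore \emph{invariant} while $i$ is in service (the elapsed-time gain exactly cancels the work drop) and grows at unit rate while $i$ is queued; consequently, once a job's $P$ strictly exceeds $j$, it stays above $j$ until that job completes. Second, any odd-indexed A-job $A_k^{2l-1}$, at the moment of its first preemption, has been served for exactly $2j_1-j=j+2-2\delta$ units (the gap between successive priority completions within a block A), leaving remaining work $2\delta-2$ and $P=j_1+(2\delta-2)=j+\delta-1$, which strictly exceeds $j$ for $\delta\geq 2$. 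Combined with the first observation, no previously-preempted odd A-job can ever return to preempt a priority-to-be job.

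With these tools the induction over arrival epochs is clean. When $A_k^{2m}$ arrives ($j_1$ units after $A_k^{2m-1}$), the in-service $A_k^{2m-1}$ has $P=j+\delta-1\geq j=P(A_k^{2m})$, and every older queued job has $P>j$ by the previous paragraph; hence (for $\delta\geq 2$) SPST strictly prefers $A_k^{2m}$. It is then run to completion, the only potential preemptor being $A_k^{2m+1}$ arriving $j_1$ later, which ties at $P=j$ with $A_k^{2m}$ and loses the earlier-arrival tie-break. Thus $A_k^{2m}$ has sojourn~$j$, proving (c); the degenerate case $\delta=1$ gives $j_1=j$ and every job is trivially priority. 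Part (d) then follows the same template: when $B_k^m$ arrives, any earlier $B_k^{m'}$ has completed (since $j<j_2$), every remaining job is an old A-job with $P>j$, so $B_k^m$ wins with $P=j$ and is served without preemption, the next arrival being at least $j_1$ time units away and handled by the same tie/non-interference argument.

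The main obstacle I anticipate is sustaining the full state description across block transitions --- in particular, confirming that the $1+\delta$ gap between successive B-priority completions (and between blocks A and B) is always spent serving some queued old A-job rather than going idle, which is necessary for the busy period to continue. This requires tracking how many queued A-jobs remain and how their residual work shrinks across multiple partial services in various gaps, but crucially the invariance of $P$ under service ensures that no such partial service can drop an old A-job's $P$ below $j$. Edge cases (block A of length~$1$, the very first block of a busy period, and $\delta=1$) will need separate sanity checks.
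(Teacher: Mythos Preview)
Your proposal is correct and, for parts (c) and (d), considerably more complete than what the paper offers. The paper's proof of Claim~\ref{claimSPST} consists of four one-line sketches: for (c) it cites only ``the construction of the blocks and for $\delta\le j/2$, $Y_{k-1}+Y_k>j$,'' and for (d) merely ``the construction of blocks~B.'' Those hints point at the right timing fact (two consecutive interarrivals exceed one service time, so the preceding priority job has already finished) but do not explain why, under SPST specifically, the newly arrived even-$A$ or $B$ job actually wins the comparison against every older queued job. Your observation that the predicted sojourn time $P$ is invariant during service and strictly increasing while waiting is exactly the missing ingredient: once any job has been preempted its $P$ is permanently above $j$, so a fresh arrival with $P=j$ always wins (strictly for $\delta\ge 2$, and vacuously for $\delta=1$). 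This is a cleaner and more self-contained route than a bare appeal to $Y_{k-1}+Y_k>j$, and it simultaneously delivers the value $P=j+\delta-1$ needed for Lemma~\ref{lemma2}.

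One small over-worry: the ``obstacle'' you flag (tracking which old $A$-job occupies each gap and whether the server could go idle) is not needed for Claim~\ref{claimSPST}. Idleness is ruled out by the standing hypothesis that we are inside a busy period of length $n$, and the identity of the in-gap job is irrelevant once you know every non-fresh job has $P>j$. The finer tracking only matters later for Lemma~\ref{lemma2}, and even there comparing $P=j+\delta-1$ for the last odd $A$-job against $P\ge 2j+1$ for older leftovers (they waited through at least one full priority service) is enough.
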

\begin{proof}[Proof of claim \ref{claimSPST}]~\newline
    \begin{minipage}[t]{\linewidth}
    \begin{enumerate}[a)]
    \item The first job in a busy period does not constitute either of the blocks.
    
    \item Follows observation \ref{obs2} and the construction of the blocks. 
    
    
    
    \item Follows the construction of the blocks and for $\delta \le \frac{j}{2}$, $Y_{k-1}+Y_{k}>j$ for any $k>1$. 
    
    \item Follows the construction of blocks B.
    
    \end{enumerate}
    \end{minipage}
\end{proof}
\subsection{Proof of lemma \ref{lemma1}}\label{section:lemma1}
\begin{proof}[\unskip\nopunct]

\noindent Following claim \ref{claimSPST}, $n_{SPST}^P=\sum \limits_{i=1}^{n_A} \lfloor \frac{n(A_i)}{2} \rfloor + n_{j_2}+1$. 

\begin{case}[$n_A=n_B$]\label{n_a=n_b}
\normalfont
\begin{align*}
    n_{SPST}^P&\ge \sum \limits_{i=1}^{n_A} \frac{n(A_i)-1}{2}+n_{j_2}+1\\
              &\ge \frac{n_{j_1}-n_A}{2}+n_{j_2}+1\\
              &\ge \frac{n_{j_1}-n_{j_2}}{2}+n_{j_2}+1 & (\because n_B\le n_{j_2})\\
              &\ge \frac{n-1}{2}+1 & (\text{from claim}\; \ref{claimSPST})
\end{align*}
which gives $n_{SPST}^P\ge \Big \lceil \frac{n}{2} \Big \rceil$.

\end{case}

\begin{case}[$n_A=n_B+1$] \label{n_a=n_b+1}~\newline
\normalfont
When $n$ is even,
\begin{align*}
    n_{SPST}^P&\ge \sum \limits_{i=1}^{n_A} \frac{n(A_i)-1}{2}+n_{j_2}+1\\
              &\ge \frac{n_{j_1}-n_A}{2}+n_{j_2}+1\\
              &\ge \frac{n_{j_1}-(n_{j_2}+1)}{2}+n_{j_2}+1 & (\because n_B\le n_{j_2})\\
              &\ge \frac{n-2}{2}+1 & (\text{from claim}\; \ref{claimSPST})\\
              &\ge \frac{n}{2}.
\end{align*}
When $n$ is odd, there are four possible subcases as follows.
\begin{subcase}[$n_A$ is odd, $n_{j_1}$ is odd]\label{subcase1}
\normalfont
It is noted that $n_B$ is even and $n_{j_2}$ is odd (from claim \ref{claimSPST}). This implies that at least one block B has even number of jobs. Therefore, $n_B \le n_{j_2}-1$.
\begin{align*}
    n_{SPST}^P&\ge \sum \limits_{i=1}^{n_A} \frac{n(A_i)-1}{2}+n_{j_2}+1\\
              &\ge \frac{n_{j_1}-n_A}{2}+n_{j_2}+1\\
              &\ge \frac{n_{j_1}-n_{j_2}}{2}+n_{j_2}+1 & (\because n_B\le n_{j_2}-1)\\
              &\ge \frac{n-1}{2}+1 & (\text{from claim}\; \ref{claimSPST})\\
              &\ge \Big \lceil \frac{n}{2} \Big \rceil.
\end{align*}
\end{subcase}
\begin{subcase}[$n_A$ is even, $n_{j_1}$ is odd]\label{subcase2}
\normalfont
Here $n_B$ is odd and $n_{j_2}$ is odd. This implies that at least one block A has even number of jobs, Say, one such block is $A_{k'}$. 
\begin{align*}
    n_{SPST}^P&\ge \frac{n(A_{k'})}{2}+\sum \limits_{i=1}^{n_B} \frac{n(A_i)-1}{2}+n_{j_2}+1\\
              &\ge \frac{n_{j_1}-n_B}{2}+n_{j_2}+1\\
              &\ge \frac{n_{j_1}-n_{j_2}}{2}+n_{j_2}+1 & (\because n_B\le n_{j_2})\\
              &\ge \frac{n-1}{2}+1 & (\text{from claim}\; \ref{claimSPST})\\
              &\ge \Big \lceil \frac{n}{2} \Big \rceil.
\end{align*}
\end{subcase}
\begin{subcase}[$n_A$ is even, $n_{j_1}$ is even]
\normalfont
In this case, there is at least one block B that has even number of jobs. The lower bound for $n_{SPST}^P$ follows subcase \ref{subcase1} giving $n_{SPST}^P\ge \Big \lceil \frac{n}{2} \Big \rceil$.

\end{subcase}

\begin{subcase}[$n_A$ is odd, $n_{j_1}$ is even]\label{subcase4}
\normalfont
There is at least one block A that has even number of jobs and hence, the lower bound for $n_{SPST}^P$ in this case follows subcase \ref{subcase2} giving $n_{SPST}^P\ge \Big \lceil \frac{n}{2} \Big \rceil$.
\end{subcase}
\end{case}
From cases \ref{n_a=n_b} and \ref{n_a=n_b+1}, it is proved that there are at least $\lceil \frac{n}{2} \rceil$ priority jobs under SPST discipline for $\delta \le \frac{j}{2}$ .
\end{proof}

\subsection{Proof of lemma \ref{lemma2}}\label{section:lemma2}
\begin{proof}[\unskip\nopunct]

For any $i$, if $n(A_i)$ is odd, following claim \ref{claimSPST}, $A_i^{n(A_i)-1}$ is a priority job. 
Although block $A_i$ might be followed by block $B_i$, since $j_1+j_2>2j$, the sojourn time of $A_i^{n(A_i)}$ is governed only by its preceding priority job, $A_i^{n(A_i)-1}$, and hence, its waiting time in the queue is $\delta-1$. Therefore, the sojourn time of $A_i^{n(A_i)}$ is $j+\delta-1$.
\end{proof}

\subsection{Proof of lemma \ref{lemma3}}\label{section:lemma3}
\begin{proof}[\unskip\nopunct]

For a busy period, assume that $n$ is even and there is at least one block $A_j$ with even $n(A_j)$. By the analysis of cases \ref{n_a=n_b} and \ref{n_a=n_b+1} as in lemma \ref{lemma1}, we get $n^P_{SPST}\ge \frac{n}{2}+1$. Let $m_{e}$ denote the number of blocks $A_k$ with even $n(A_k)$. Extending the above-mentioned argument to all $m_e \ge 2$ such blocks, there are at least $\frac{n}{2}+\frac{m_e}{2}$ priority jobs in the busy period. In other words, except for one even block A, the presence of all other even blocks A improves the bound by $\frac{1}{2}$.
\end{proof}

\subsection{Proof of lemma \ref{lemma4}}\label{section:lemma4}
\begin{proof}[\unskip\nopunct]
~\newline
    \begin{minipage}[t]{\linewidth}
    \begin{enumerate}[a)]
    \item Under FCFS, the first job in a busy period is not preempted by any of the subsequent arrivals.
    \item Follows lemma \ref{lemma4}\ref{claimFCFSa}.
    \item Let $W_k^{FCFS}$ denote the sojourn time of the $k^{th}$ job arrival in the busy period under FCFS.  
    From lemma \ref{lemma4}\ref{claimFCFSa}, we know that $W_{1}^{FCFS}=j$ and for $k\ge 2$, $W_{k}^{FCFS}=W_{k-1}^{FCFS}-Y_{k-1}+j\ge j+1$. Owing to claim \ref{remark1}, for $n\ge 3$, there exists at least one job that follows the interarrival time $j_1$.
    Therefore, for such jobs, $k\ge 3$, 
    \begin{align*}
        W_{k}^{FCFS}&\ge 2j-Y_{k-1}+1\ge j+\delta
    \end{align*}
    \end{enumerate}
\end{minipage}
\end{proof}
\bibliographystyle{IEEEtran}
\bibliography{references}

\end{document}